\newtheorem{assumption}{Assumption}[section]
\newtheorem{theorem}[assumption]{Theorem}
\theoremstyle{plain}  
\theoremstyle{definition}  
\theoremstyle{remark}  
\newtheorem{remark}[assumption]{Remark}
\newcommand{\E}{\mathbb{E}}  
\newcommand{\prob}{\mathbb{P}}  
\newcommand{\indicator}{\mathbb{I}}
\newcommand{\primetreatmentunits}[1]{\mathcal{T}^{\text{prim}}_{#1}}
\newcommand{\mixeoutcomeunits}{\mathcal{O}_{\text{Both}}}
\newcommand{\exposedcustomers}[1]{\mathcal{O}_{#1}}
\newcommand{\primaryoutcomeunits}{\mathcal{O}_{\text{prim}}}
\newcommand{\treatment}{\text{treatment}}
\newcommand{\outcome}{\text{outcome}}
\newcommand{\primary}{\text{prim}}
\newcommand{\secondary}{\text{sec}}
\newcommand{\avg}{\text{avg}}
\newcommand{\bZ}{\mathbf{Z}}
\newcommand{\bw}{\mathbf{w}}
\newcommand{\Edirect}{E^{\text{Dir}}} 
\newcommand{\Eindirect}{E^{\text{Ind}}}
\newcommand{\primaryset}{\mathcal{T}_{\text{prim}}}
\newcommand{\secondaryset}{\mathcal{T}_{\text{sec}}}
\newcommand{\PTTE}{\text{PTTE}}
\newcommand{\STTE}{\text{STTE}}
\definecolor{primarycolor}{RGB}{70, 130, 180}  
\definecolor{secondarycolor}{RGB}{128, 128, 128}  
\definecolor{outcomecolor}{RGB}{46, 125, 50}  
\definecolor{mixedcolor}{RGB}{255, 152, 0}  
\definecolor{subsetcolor}{RGB}{156, 39, 176}  
\definecolor{othercolor}{RGB}{0, 0, 0}  
\title{Estimating Total Effects in Bipartite Experiments with Spillovers and Partial Eligibility}
\author{Albert Tan{\textsuperscript{*}}\quad
	Mohsen Bayati{\textsuperscript{*,$\dagger$}}\quad
	James Nordlund{\textsuperscript{*}}\quad
	Roman Istomin{\textsuperscript{*}}}
\date{}
\begin{document}
	\maketitle
	\begingroup
	\renewcommand\thefootnote{}
	\footnotetext{\textsuperscript{*}\;Amazon \quad \textsuperscript{$\dagger$}\;Stanford University}
	\addtocounter{footnote}{-1}
	\endgroup

	\begin{abstract}
		We study randomized experiments in bipartite systems where only a subset of treatment-side units are eligible for assignment while all units continue to interact, generating interference. We formalize \emph{eligibility-constrained bipartite experiments} and define estimands aligned with full deployment: the \emph{Primary Total Treatment Effect} (PTTE) on eligible units and the \emph{Secondary Total Treatment Effect} (STTE) on ineligible units. Under randomization within the eligible set, we give identification conditions and develop interference-aware ensemble estimators that combine exposure mappings, generalized propensity scores, and flexible machine learning. We further introduce a projection that links treatment- and outcome-level estimands; this mapping is exact under a \emph{Linear Additive Edges} condition and enables estimation on the (typically much smaller) treatment side with deterministic aggregation to outcomes. In simulations with known ground truth across realistic exposure regimes, the proposed estimators recover PTTE and STTE with low bias and variance and reduce the bias that could arise when interference is ignored. Two field experiments illustrate practical relevance: our method corrects the direction of expected interference bias for a pre-specified metric in both studies and reverses the sign and significance of the primary decision metric in one case.  
	\end{abstract}

	\section{Introduction}\label{sec:intro}
	
	Traditional A/B testing assumes the Stable Unit Treatment Value Assumption (SUTVA), under which each unit’s outcome depends only on its own treatment \citep{cox1958planning}. Interference or spillover, the violation of this assumption, is ubiquitous in networked settings. Seminal contributions clarified identification and inference under interference \citep{halloran1995causal,sobel2006what,hudgens2008toward}; for a concise review of design and analysis choices when SUTVA fails, see \citep{athey2017econometrics}.
	
	In many applications there is a natural bipartite (two-sided) structure underlying the interference. For example, in ride-sharing services the \textit{treatment units} may be drivers (e.g., a new routing or dispatch policy) while outcomes are measured on riders (wait times, booking rates, satisfaction). A rider may interact with multiple drivers, and a driver’s treatment can affect not only their riders but also other nearby drivers and their riders, creating interference patterns that invalidate standard A/B analyses. Importantly, only a \emph{subset} of drivers may be eligible for treatment (e.g., those in specific cities, app versions, or fleet programs). Even when assignment is restricted to these eligible drivers, ineligible drivers remain in the system and continue to affect matching, congestion, and learning dynamics.
	
	We study this problem by extending the bipartite experimental design framework of \cite{doudchenko2020causal} to \emph{eligibility-constrained bipartite experiments}, where randomization is feasible only among a subset of treatment-side units but interference propagates through all units. Our design formalizes (i) eligibility sets and exposure mappings that allow ineligible units to affect outcomes, and (ii) identification under randomization within the eligible set. Methodologically, we incorporate rich interaction data between treatment and outcome units, build on generalized propensity score ideas of \cite{hirano2004propensity}, and leverage modern machine learning to flexibly model exposure–response relationships.
	
	\subsection{Intuition Behind Primary and Secondary Treatment Effects}
	
	Consider a hypothetical ride-sharing service that wants to test a new app design feature: enhanced visual prominence and preferential placement for economy vehicles in the rider interface. The service operates with multiple vehicle categories: economy cars (compact sedans), premium vehicles (luxury sedans), and XL vehicles (SUVs). The treatment involves displaying economy vehicles at the top of the selection screen with highlighted badges indicating ``Quick Pickup'' or ``Eco-Friendly Choice,'' enhanced driver profiles showing ratings and estimated arrival times more prominently, and subtle visual animations that draw attention to these options. Due to technical constraints or strategic considerations, this enhanced presentation can only be applied to economy vehicles, while premium and XL vehicles maintain their standard display format.
	
	In this setting, economy vehicles represent our primary units; they are eligible for the enhanced app treatment and can receive the preferential placement. Premium and XL vehicles constitute our secondary units; while they cannot receive the enhanced presentation directly, they remain active in the system and may experience indirect effects. When economy vehicles become more visually prominent and appear first in the selection interface, rider behavior shifts in many ways. Some riders who might have scrolled down to select premium vehicles now choose the prominently displayed economy option. Others, particularly those looking for luxury or larger vehicles, may actually become more likely to select premium or XL options as a conscious reaction against the service's apparent steering toward economy vehicles.
	
	We introduce the notion of Primary Total Treatment Effect (PTTE) that captures what happens to economy vehicles when we compare two scenarios: one where all economy vehicles receive the enhanced app treatment versus one where none do. This effect encompasses not only the direct impact on economy vehicle utilization rates but also the competitive dynamics that emerge. For instance, when all economy vehicles have enhanced placement, riders cannot simply scroll past to find a ``normal'' economy option; they must either select from the prominently featured economy vehicles or deliberately choose a different vehicle category. The PTTE thus measures the true impact on the primary units when the feature is fully deployed.
	
	Similarly, we introduce the notion of Secondary Total Treatment Effect (STTE) that quantifies the impact on premium and XL vehicles. Even though these vehicles never receive the enhanced app treatment, their outcomes may change substantially. Premium vehicles might see decreased demand as riders are steered toward the more prominent economy options. Conversely, they might experience increased utilization from riders who interpret the economy vehicle promotion as a signal of lower quality or longer wait times, leading them to ``upgrade'' their choice. XL vehicles might capture riders traveling in groups who previously would have booked multiple economy vehicles but now, seeing the service's emphasis on individual economy rides, opt for a single larger vehicle instead. The magnitude and direction of these effects depend on complex behavioral patterns, user interface design principles, and rider preferences that vary across time and geography.
	
	The main insight is that traditional experimentation, which might randomly assign the enhanced app treatment to half of economy vehicles, creates an artificial environment. Riders see an inconsistent interface where some economy vehicles appear with enhanced prominence while others appear normally, potentially causing confusion and unnatural selection patterns. This leads to misestimating the treatment effect; at full deployment, the consistent interface creates different choice architecture and behavioral responses. The methodology studied in this paper aims to correct for these biases by explicitly modeling the bipartite structure of rider-driver interactions and the eligibility constraints that define which units can receive the app treatment.
	
	\subsection{Related Work}
	
	The foundations of causal inference under interference draw on early epidemiologic formulations for vaccine trials \citep{halloran1995causal} and potential-outcomes frameworks that formalized estimands under partial interference and group randomization \citep{sobel2006what,hudgens2008toward}, as well as neighborhood-based interference \citep{sussman2017elements}. Subsequent work developed randomization-based tests and design principles \citep{rosenbaum2007interference}, generalized estimands and inverse-probability estimators for arbitrary interference \citep{tchetgen2012interference,aronow2017general}, and clarified what standard estimators target when interference is unknown \citep{savje2021unknown}. These threads motivate network-aware designs such as graph-cluster randomization and bias-reducing analyses \citep{ugander2013graph,eckles2017design}, which our bipartite and eligibility-constraint setting builds upon. 
	
	More recent work considers settings where the interference network is only partially observed or entirely unknown, developing designs and estimators that remain valid in such cases \citep{agarwal2023network,yu2022estimating,cortez2022staggered,shirani2024causal,shirani2025can}. Within the class of experiments where a bipartite interaction structure between treatment and outcome units is observed, a growing literature studies design and analysis of such experiments. \citet{pougetabadie2019variance} propose correlation-clustering designs that reduce variance under linear exposure mappings, and \citet{harshaw2023bipartite} introduce the Exposure Reweighted Linear estimator and associated clustering strategies for unbiased and asymptotically normal total effect estimation. \citet{brennan2022cluster} study cluster-randomized designs in one-sided bipartite settings, and \citet{shi2024scalable} develop scalable randomization-based inference and covariate-adjusted estimators for large-scale bipartite experiments. Our paper builds directly on \citet{doudchenko2020causal}, who introduced the bipartite framework and generalized propensity score ideas for two-sided experiments, and is complementary to these design- and analysis-focused contributions.
	
	Beyond randomized experiments, there is emerging work on bipartite interference in quasi-experimental and longitudinal settings. \citet{chen2024differenceindifferences} develop a difference-in-differences framework under bipartite network interference with staggered adoption, reconfiguring panel data so that analysis is carried out at the intervention-unit level while preserving interpretation at the outcome level. Conceptually, their mapping between levels is related to our projection between treatment- and outcome-level estimands, but our focus is on randomized experiments with partial eligibility and explicit primary and secondary total effects.

	A related line of work examines experimental design and equilibrium responses in large two-sided systems. \citet{bajari2023experimental} survey multiple randomization designs and interference-aware decision metrics, while \citet{holtz2020reducing} analyze interference bias in pricing experiments. \citet{johari2022experimental} and \citet{munro2021treatment} study equilibrium adjustments to interventions on one side of a service and the implications for experimentation. Our framework differs by imposing explicit eligibility constraints on treatment-side units and defining primary and secondary total treatment effects together with a projection mapping between treatment- and outcome-level estimands.

	There is also a broad methodological literature on interference and spillovers spanning statistics, econometrics, and related fields. Representative contributions include design-based inference and randomization tests under general interference \citep{athey2018exact,basse2019randomization,puelz2022graph}, designs for networked and cluster experiments \citep{jagadeesan2020designs,baird2018optimal}, two-stage settings and noncompliance \citep{basse2018analyzing,imai2021causal}, identification under partial or unknown network exposure \citep{leung2022causal,egami2021spillover,forastiere2021identification}, and estimation of direct and indirect effects \citep{hu2022average,choi2017monotone}. Additional perspectives include auto-$g$-computation on networks \citep{tchetgen2021auto}, peer encouragement designs \citep{kang2016peer}, and frameworks for bipartite interference \citep{zigler2021bipartite}. For broader syntheses of interference and network-aware experimentation, see \citet{ogburn2014diagrams,vanderweele2013social,benjaminchung2018spillover,keele2022spillover,athey2017econometrics}.
	
	We contribute along two dimensions. First, we formalize eligibility-constrained bipartite designs in which only a subset of treatment-side units can be randomized while ineligible units still participate in interference; we define estimands (PTTE and STTE) and give identification under randomization within the eligible set. Second, we develop and analyze a projection mapping between treatment- and outcome-level total effects that holds under a \emph{Linear Additive Edges} assumption. This mapping preserves the estimand, enables estimation on the typically much smaller set of treatment units with projection-based inference to the outcome level, and delivers substantial computational gains without sacrificing interpretability.

	\section{Methodology}\label{sec:methods}
	
	This section develops our framework for eligibility-constrained bipartite experiments. For ease of reference, all notation is summarized in Table~\ref{tab:notation} in the Appendix. We begin by formalizing the experimental setup and introducing the distinction between primary (eligible) and secondary (ineligible) treatment units (Section~\ref{subsec:setup}). We then develop ensemble estimators at two levels: first for outcome units like riders (Section~\ref{subsec:outcome-ensemble}) and then for treatment units like drivers (Section~\ref{subsec:treatment-unit-ensemble}), defining the Primary Total Treatment Effect (PTTE) at each level. Section~\ref{subsec:assumptions} presents the formal assumptions required for identification. Our projection method (Section~\ref{subsec:projection}) links treatment effects across levels under a linear additive edges assumption and enables computationally efficient estimation. Section~\ref{subsec:estimation} details our machine learning-based estimation procedures that leverage generalized propensity scores and network structure. Finally, Section~\ref{subsec:secondary} extends the framework to quantify Secondary Total Treatment Effects (STTE) on ineligible units, capturing the full ecosystem impact of the intervention.
	
	\subsection{Problem setup}\label{subsec:setup}
	
	Consider a bipartite experiment with two types of units: treatment units (e.g., drivers in a ride-sharing service) and outcome units (e.g., riders), where outcomes are observed for both types. In our ride-sharing example, enhanced app placement may be deployed to certain vehicle types, affecting both driver utilization and rider selection patterns.
	We distinguish between two categories of treatment units, as illustrated in Figure \ref{fig:bipartite-graph}. Primary units ($\primaryset$) are those eligible for treatment. Secondary units ($\secondaryset$), while not eligible for direct treatment, may be affected through spillover effects due to shared connections with outcome units. The right side of Figure \ref{fig:bipartite-graph} represents the outcome units, with connections between treatment and outcome units shown as edges in the graph. We denote $\primetreatmentunits{i}$ as the set of primary units connected to outcome unit $i$, and conversely, $\exposedcustomers{j}$ represents the set of outcome units connected to treatment unit $j$, and $\primaryoutcomeunits$ is the set of outcomes connected to at least one primary treatment unit.
	The overlap region $\mixeoutcomeunits$ captures outcome units exposed to both primary and secondary treatment units. 
	The matrix $\bw$ captures the relationship structure, where $w_{ij}$ represents the connection strength between outcome unit $i$ and treatment unit $j$.
	\begin{figure}
		\centering
		
		\begin{tikzpicture}[scale=0.5]
			\tikzstyle{primaryset} = [draw=primarycolor, ellipse, minimum width=2cm, minimum height=3cm, line width=1.5pt]
			\tikzstyle{secondaryset} = [draw=secondarycolor, ellipse, minimum width=1.8cm, minimum height=2.5cm, line width=1.5pt]
			\tikzstyle{outcomeset} = [draw=outcomecolor, ellipse, minimum width=2cm, minimum height=3cm, line width=1.5pt]
			\tikzstyle{mixedset} = [draw=othercolor, ellipse, minimum width=2cm, minimum height=3cm, line width=1.5pt]
			\tikzstyle{subset} = [draw=subsetcolor, circle, minimum size=.9cm, line width=1pt]
			
			\node[primaryset] (Aprim) at (0,0) {};
			\node[above=0.1cm of Aprim, text=primarycolor, font=\bfseries] {$\primaryset$};
			
			\node[secondaryset] (As) at (0,-6) {};
			\node[text=secondarycolor, font=\bfseries] at ($(As)$) {$\secondaryset$};
			
			\node[outcomeset] (Cprim) at (8,0) {};
			\node[above=0.1cm of Cprim, text=outcomecolor, font=\bfseries] {$\primaryoutcomeunits$};
			
			\node[mixedset] (Cboth) at (8,-4) {};
			\node[text=mixedcolor, font=\bfseries] at ($(Cboth)+(0,2)$) {$\mixeoutcomeunits$};
			
			\node[subset] (Ac) at (0,-1) {};
			\node[text=subsetcolor, font=\small] at (Ac) {$\primetreatmentunits{i}$};
			
			\node[subset] (Ca) at (8,-.5) {};
			\node[text=subsetcolor, font=\small] at (Ca) {$\exposedcustomers{j}$};
			
			\node[text=primarycolor, font=\bfseries] (a) at (0,2) {$j$};
			\node[text=outcomecolor, font=\bfseries] (c) at (8,2) {$i$};
			
			\begin{scope}[gray!50, thin]
				\foreach \y in {-1,-0.5,0,0.5,1} {
					\draw (2,\y) -- (6,\y);
				}
				
				\draw (1.5,-4.5) -- (6.1,-1);
				\draw (1.7,-5) -- (6.3,-1.5);
				
				\foreach \y in {-5.5,-6,-6.5} {
					\draw (1.7,\y) -- (6,\y+2);
				}
			\end{scope}
			
			\draw[subsetcolor, thick, ->] (a) to[out=20, in=180] (Ca);
			\draw[subsetcolor, thick, ->] (c) to[out=160, in=0] (Ac);
			
			\node[font=\large] at (-5,-2) {Treatment Units};
			\node[font=\large] at (14,-2) {Outcome Units}; 
			
		\end{tikzpicture}
		\caption{Bipartite structure in eligibility-constrained experiments. Primary units $\primaryset$ (top left) are eligible for treatment; secondary units $\secondaryset$ (bottom left) are ineligible but connected to outcomes. Outcome units $\primaryoutcomeunits$ (top right) connect to primaries; $\mixeoutcomeunits$ (overlap) connect to both.}\label{fig:bipartite-graph}
	\end{figure}

	\paragraph{Edge-level representation and additive metrics.}
	In many applications of interest, including our motivating ride-sharing example, 
	it is natural to think of outcomes as aggregating over interactions along edges 
	of the bipartite graph: each ride corresponds to a rider–driver edge, each 
	impression to an advertiser–viewer edge, and so on. We formalize this by 
	introducing edge-level potential outcomes $Y_{ij}(\mathbf Z)$ for every 
	existing edge $(i,j)$, with $Y_{ij}(\mathbf Z)=0$ when $(i,j)$ is absent. 
	We then define
	\[
	Y_i(\mathbf Z) = \sum_j Y_{ij}(\mathbf Z), 
	\qquad
	Y_j(\mathbf Z) = \sum_i Y_{ij}(\mathbf Z),
	\]
	so that each unit's outcome is the sum over its incident edges.
	Many business metrics (counts of rides, orders, impressions, revenue, and 
	their additively weighted variants) are naturally of this form.
	
	\paragraph{Primary and secondary outcome components.}
	Let $\mathcal T_{\text{prim}}$ and $\mathcal T_{\text{sec}}$ denote the sets 
	of primary (eligible) and secondary (ineligible) treatment units, respectively. 
	For any outcome unit $i$ we 
	decompose its outcome as
	\[
	Y_i(\mathbf Z) 
	= \sum_{j \in \mathcal T_{\text{prim}}} Y_{ij}(\mathbf Z)
	+ \sum_{j \in \mathcal T_{\text{sec}}} Y_{ij}(\mathbf Z)
	=: Y_{i,\primary}(\mathbf Z) + Y_{i,\secondary}(\mathbf Z).
	\]
	Here $Y_{i,\primary}(\mathbf Z)$ collects the contribution coming through 
	primary treatment units, while $Y_{i,\secondary}(\mathbf Z)$ collects the 
	contribution coming through secondary units. In the ride-sharing example, 
	$Y_{i,\primary}$ would be the part of rider $i$'s metric attributable to trips 
	taken with primary (eligible) vehicles, and $Y_{i,\secondary}$ the part 
	attributable to secondary vehicle categories.

	\subsection{Outcome-Unit Ensemble}\label{subsec:outcome-ensemble} 
	
	For each outcome unit $i$, we define exposure as: 
	\begin{equation}\label{eq:exposure}
		E_i(\bZ) = \sum_{j \in \primetreatmentunits{i}} w_{ij} Z_j\,,
	\end{equation}
	where $\primetreatmentunits{i}$ is the set of primary units connected to $i$, $Z_j$ indicates treatment assignment of unit $j$, and $\bZ$ is a vector of size $|\primaryset|+|\secondaryset|$, containing treatment assignment of all units. Note that treatment assignment of secondary units is always equal zero because they are not eligible for treatment, i.e., the last $|\secondaryset|$ coordinates of $\bZ$ are always equal to zero.
	The outcome specification becomes:
	\begin{equation}
		Y_{i,\primary}(\bw,E_i,X_i) = \Phi\Big(n^{\primary}_i, E_i, r(E_i,\bw,X_i)\Big) + \epsilon_i\label{eq:outcome-unit-specification}
	\end{equation}
	where $n^{\primary}_i$ is the number of primary units connected to $i$, $r(\cdot)$ is the generalized propensity score, $X_i$ are covariates, $\epsilon_i$ is the error term, and
	$\Phi$ is the outcome function. In the implementations, we utilize binomial expression for the propensity score which effectively specializes to the case $w_{ij}=1/n^{\primary}_i$ which means $E_i(\bZ)$ would be the fraction of treated primary neighbors.

	Let $\mathbf Z^{(1)}$ denote the assignment 
	where all primary units are treated (and secondary units are ineligible), and 
	$\mathbf Z^{(0)} \equiv \mathbf 0$ denote the assignment where no units are treated. It is straightforward to see that under $\mathbf Z^{(1)}$ all exposures in $\primaryoutcomeunits$ are equal to $1$ and under $\mathbf Z^{(0)}$ all such exposures are equal to $0$. Hence, at the outcome level, we define the Primary Total Treatment Effect (PTTE) as
	\begin{align}
		\PTTE_{\outcome} 
		&= \frac{1}{|\mathcal O_{\text{prim}}|}
		\sum_{i\in\mathcal O_{\text{prim}}}
		\Big(
		\E\big[ Y_{i,\primary}(\mathbf Z^{(1)}) \big]
		- 
		\E\big[ Y_{i,\primary}(\mathbf Z^{(0)}) \big]
		\Big)\nonumber\\
		&=\frac{1}{|\primaryoutcomeunits|}\sum_{i\in\primaryoutcomeunits}\left(\E\Big[Y_{i,\primary}(E_i=1)\Big] - \E\Big[Y_{i,\primary}(E_i=0)\Big]\right)\,. \label{eq:PTTE-outcome-primary}
	\end{align}
	That is, $\PTTE_{\outcome}$ captures the change in the primary-edge component 
	of outcome-unit metrics when we move from no primary units treated to all 
	primary units treated. 
	Note that, expectation (with notation $\E$) is taken with respect to all randomness. We use $Y_{i,\primary}(e)$ to denote the potential outcome of unit $i$ under the exposure level $e$, and model $Y_{i,\primary}(e)$ via \eqref{eq:outcome-unit-specification} with $\bw$ and $X_i$ being fixed throughout. This specification corresponds to an \textit{exposure mapping assumption}: for any two assignments $\bZ$ and $\bZ'$ with $E_i(\bZ)=E_i(\bZ')$, we have $Y_{i,\primary}(\bZ)=Y_{i,\primary}(\bZ')$
	
	\subsection{Treatment-Unit Ensemble}\label{subsec:treatment-unit-ensemble}
	
	Denoting indicator function of an event $A$ by $\indicator(A)$, at the treatment-unit level, we define direct exposure by 
	\begin{equation}\label{eq:exposure-dir}
		\Edirect_j(\bZ) = Z_j\sum_{i\in\primaryoutcomeunits} \indicator(j\in \primetreatmentunits{i})\,,
	\end{equation}
	and indirect exposure by
	\begin{equation}\label{eq:exposure-ind}
		\Eindirect_j(\bZ)= \sum_{i\in\primaryoutcomeunits}\Big[\indicator(j\in\primetreatmentunits{i}) \sum_{k\neq j} Z_k \cdot \indicator(k\in \primetreatmentunits{i})\Big]\,.
	\end{equation}

	At the treatment-unit level, utilizing
	$
	Y_j(\mathbf Z) = \sum_i Y_{ij}(\mathbf Z)
	$, we define
	\begin{equation}\label{eq:PTTE-treatment-v0}
		\PTTE_{\treatment}
		= \frac{1}{|\mathcal T_{\text{prim}}|}
		\sum_{j\in\mathcal T_{\text{prim}}}
		\Big(
		\E\big[ Y_j(\mathbf Z^{(1)}) \big]
		- 
		\E\big[ Y_j(\mathbf Z^{(0)}) \big]
		\Big),
	\end{equation}
	i.e., the average total effect of full deployment on primary treatment units. Under our paramteric representation $Y_j(\Edirect_j(\bZ),\Eindirect_j(\bZ),X_j)$,
	\eqref{eq:PTTE-treatment-v0} can be written as,
	\begin{equation}\label{eq:PTTE-treatment}
		\PTTE_{\treatment} = \frac{1}{|\primaryset|}\sum_{j\in\primaryset}\E\left[\Psi\Big(\Edirect_j(\mathbf Z^{(1)}),\Eindirect_j(\mathbf Z^{(1)}), X_j\Big) - \Psi\Big(\Edirect_j(\mathbf Z^{(0)}),\Eindirect_j(\mathbf Z^{(0)}), X_j\Big)\right]\,,
	\end{equation}
	where it is easy to see that $\Edirect_j(\mathbf Z^{(0)})$ and $\Eindirect_j(\mathbf Z^{(0)})$ are both equal to $0$.

	\subsection{Formal Assumptions}\label{subsec:assumptions}
	
	Our framework relies on three key assumptions adapted from \cite{doudchenko2020causal}.
	\begin{assumption}[Exogenous Network]\label{ass:fixed-weights}
		$\bw$ is not affected by the treatment assignment $\bZ$.
	\end{assumption}
	This assumption requires that the connections between outcome and treatment units remain stable throughout the experiment. In our application, this means treatment does not alter which outcome units interact with which treated units.
	
	\begin{assumption}[Weak Unconfoundedness]\label{ass:weak-unconfoundedness}
		For all exposure levels $e$:
		$\indicator[E_i = e] \perp Y_i(e) | \bw_i$
		where $\bw_i$ represents outcome unit $i$'s connection weights to all treatment-side units.
	\end{assumption}
	
	Under randomization of treated units, this assumption holds by design, as exposure is determined by the random treatment assignment and the fixed network structure.
	
	\begin{assumption}[Overlap]\label{ass:overlap}
		For all exposure levels $e$ and weight vectors $\bw$:
		$0 < \prob(E_i = e|\bw) < 1\,.$
	\end{assumption}
	
	This positivity condition ensures sufficient variation in exposure levels for identification. In practice, it requires that the treatment probability and network structure create adequate overlap in the exposure distribution.
	
	The final assumption, already discussed at the beginning of this section, formalizes additivity of outcomes along the edges of the bipartite graph.
	
	\begin{assumption}[Linear additive edges]\label{ass:linear-additive}
		For all assignments $\mathbf Z$, unit-level outcomes can be written as
		\[
		Y_i(\mathbf Z) = \sum_j Y_{ij}(\mathbf Z),
		\qquad
		Y_j(\mathbf Z) = \sum_i Y_{ij}(\mathbf Z),
		\]
		where $Y_{ij}(\mathbf Z)$ is the potential outcome associated with edge 
		$(i,j)$ and $Y_{ij}(\mathbf Z)=0$ if $(i,j)$ is absent.
	\end{assumption}
	
	Assumption~\ref{ass:linear-additive} holds exactly for additive metrics such as 
	total rides, completed orders, or revenue, and for any linear reweighting of 
	these (e.g., weighted revenue). It is not appropriate for inherently 
	non-additive metrics (medians, quantiles, capped scores).

	\subsection{Projection Between Treatment and Outcome Effects}\label{subsec:projection}
	
	We formalize the link between the treatment- and outcome-level estimands when edge outcomes add linearly.
	
	Recall that
	$\PTTE_{\outcome}$ is defined in \eqref{eq:PTTE-outcome-primary} via the exposure $E_i$, which aggregates \emph{only} primary-unit assignments, and $\PTTE_{\treatment}$ is given in \eqref{eq:PTTE-treatment}.
	
	\begin{theorem}[Exact projection of PTTE under edge additivity]\label{thm:projection}
		Suppose Assumption~\ref{ass:linear-additive} holds. Then
		\begin{equation}\label{eq:proj_main}
			\PTTE_{\outcome} 
			= \frac{|\mathcal T_{\text{prim}}|}{|\mathcal O_{\text{prim}}|}
			\;\PTTE_{\treatment},
		\end{equation}
		where $\PTTE_{\outcome}$ is defined in \eqref{eq:PTTE-outcome-primary} in terms 
		of $Y_{i,\primary}(\mathbf Z)$ and $\PTTE_{\treatment}$ is defined in 
		\eqref{eq:PTTE-treatment} in terms of $Y_j(\mathbf Z)$.
	\end{theorem}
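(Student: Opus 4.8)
The plan is to express both estimands as scalar multiples of a single sum over the edges of the bipartite graph, after which the claimed ratio falls out from comparing normalizing constants. The only inputs needed are Assumption~\ref{ass:linear-additive}, linearity of expectation, and the fact that every index set here is finite, so that summations may be reordered freely.

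First I would expand the outcome-level estimand. Substituting $Y_{i,\primary}(\bZ)=\sum_{j\in\primaryset}Y_{ij}(\bZ)$ and pushing the expectation through the finite sum, the numerator of $\PTTE_{\outcome}$ becomes
\[
\sum_{i\in\primaryoutcomeunits}\sum_{j\in\primaryset}\Big(\E[Y_{ij}(\bZ^{(1)})]-\E[Y_{ij}(\bZ^{(0)})]\Big).
\]
Symmetrically, substituting $Y_j(\bZ)=\sum_i Y_{ij}(\bZ)$ writes the numerator of $\PTTE_{\treatment}$ as
\[
\sum_{j\in\primaryset}\sum_{i}\Big(\E[Y_{ij}(\bZ^{(1)})]-\E[Y_{ij}(\bZ^{(0)})]\Big),
\]
where the inner sum now ranges over \emph{all} outcome units.

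The crux is to show these two double sums coincide, and this is where I would lean on the convention $Y_{ij}(\bZ)=0$ whenever the edge $(i,j)$ is absent. For any primary $j\in\primaryset$, an edge $(i,j)$ can exist only if $i$ is connected to a primary unit, i.e. $i\in\primaryoutcomeunits$; hence every edge term with $i\notin\primaryoutcomeunits$ vanishes, and the unrestricted inner sum over $i$ in the treatment-level expression collapses to a sum over $\primaryoutcomeunits$. Exchanging the (finite) order of summation then shows both numerators equal the common quantity
\[
S:=\sum_{i\in\primaryoutcomeunits}\sum_{j\in\primaryset}\Big(\E[Y_{ij}(\bZ^{(1)})]-\E[Y_{ij}(\bZ^{(0)})]\Big).
\]

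Finally I would divide by the respective cardinalities: by definition $\PTTE_{\outcome}=S/|\primaryoutcomeunits|$ and $\PTTE_{\treatment}=S/|\primaryset|$, so $\PTTE_{\outcome}=(|\primaryset|/|\primaryoutcomeunits|)\,\PTTE_{\treatment}$, which is exactly \eqref{eq:proj_main} since $\primaryset=\mathcal T_{\text{prim}}$ and $\primaryoutcomeunits=\mathcal O_{\text{prim}}$. I expect the only step worth stating carefully — the sole place where anything can go wrong — to be the index-range matching: verifying that restricting the treatment-level inner sum to $\primaryoutcomeunits$ introduces no error. That is precisely where the absent-edge convention and the definition of $\primaryoutcomeunits$ (outcomes connected to at least one primary unit) do the work. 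No analytic estimates appear; the argument is pure bookkeeping of the edge decomposition, and the two different averaging denominators are what produce the factor $|\mathcal T_{\text{prim}}|/|\mathcal O_{\text{prim}}|$.
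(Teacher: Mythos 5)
Your proposal is correct and follows essentially the same route as the paper's proof: decompose both estimands into edge-level sums via Assumption~\ref{ass:linear-additive}, use the absent-edge convention to show that for $j\in\primaryset$ the sum over all outcome units collapses to the sum over $\primaryoutcomeunits$, and read off the ratio $|\mathcal T_{\text{prim}}|/|\mathcal O_{\text{prim}}|$ from the two normalizations. The paper presents this by transforming $\PTTE_{\outcome}$ step by step into $\PTTE_{\treatment}$ rather than exhibiting a common quantity $S$, but the mathematical content is identical.
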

	\begin{proof}
		By Assumption~\ref{ass:linear-additive}, for any assignment $\bZ$ and any
		$i \in \primaryoutcomeunits$,
		\[
		Y_{i,\primary}(\bZ)
		= \sum_{j\in\primaryset} Y_{ij}(\bZ),
		\qquad
		Y_j(\bZ)
		= \sum_i Y_{ij}(\bZ).
		\]
		Using the definition in \eqref{eq:PTTE-outcome-primary} and linearity of
		expectation,
		\begin{align*}
			\PTTE_{\outcome}
			&= \frac{1}{|\primaryoutcomeunits|}
			\sum_{i\in\primaryoutcomeunits}
			\E\big[ Y_{i,\primary}(\bZ^{(1)}) - Y_{i,\primary}(\bZ^{(0)}) \big] \\
			&= \frac{1}{|\primaryoutcomeunits|}
			\sum_{i\in\primaryoutcomeunits}
			\E\Big[ \sum_{j\in\primaryset}
			\big( Y_{ij}(\bZ^{(1)}) - Y_{ij}(\bZ^{(0)}) \big) \Big] \\
			&= \frac{1}{|\primaryoutcomeunits|}
			\sum_{j\in\primaryset}
			\E\Big[ \sum_{i\in\primaryoutcomeunits}
			\big( Y_{ij}(\bZ^{(1)}) - Y_{ij}(\bZ^{(0)}) \big) \Big],
		\end{align*}
		where we interchange the (finite) sums over $i$ and $j$.
		
		Every edge $(i,j)$ with $j\in\primaryset$ necessarily has
		$i\in\primaryoutcomeunits$, so
		\[
		\sum_{i\in\primaryoutcomeunits} Y_{ij}(\bZ)
		= \sum_i Y_{ij}(\bZ)
		= Y_j(\bZ).
		\]
		Thus the inner sum equals $Y_j(\bZ^{(1)}) - Y_j(\bZ^{(0)})$, and
		\[
		\PTTE_{\outcome}
		= \frac{1}{|\primaryoutcomeunits|}
		\sum_{j\in\primaryset}
		\E\big[ Y_j(\bZ^{(1)}) - Y_j(\bZ^{(0)}) \big].
		\]
		By the definition of $\PTTE_{\treatment}$ in \eqref{eq:PTTE-treatment},
		\[
		\PTTE_{\treatment}
		= \frac{1}{|\primaryset|}
		\sum_{j\in\primaryset}
		\E\big[ Y_j(\bZ^{(1)}) - Y_j(\bZ^{(0)}) \big],
		\]
		so
		\[
		\PTTE_{\outcome}
		= \frac{|\primaryset|}{|\primaryoutcomeunits|}\;\PTTE_{\treatment},
		\]
		which is \eqref{eq:proj_main}.
	\end{proof}

	\begin{remark}
		In applications where one suspects displacement onto ineligible units, Section~\ref{subsec:secondary} provides STTE estimands; comparing the projected quantity in \eqref{eq:proj_main} to outcome-level estimates is an empirical check.
	\end{remark}
	
	\begin{remark}
		If a decision metric applies known edge weights $\omega_{ij}\!\ge\!0$, the same proof holds with $Y_{ij}$ replaced by $\omega_{ij}Y_{ij}$ and normalizations based on total $\omega$-mass.
	\end{remark}
	
	\begin{remark}
		For ratios or quantiles, a first-order linearization around the observed edge measure provides an approximate projection with a curvature remainder; see the discussion in Section~\ref{sec:conclusion}.
	\end{remark}
	
	\subsection{Estimation Methods}\label{subsec:estimation}
	
	Our estimation procedure utilizes the observed bipartite network structure to construct features (including exposure levels, number of exposed treatment-units, and propensity scores) which serve as inputs to machine learning models (linear, polynomial, non-parametric). These models estimate the outcome function $\Phi$ or $\Psi$ from which we derive corresponding PTTE estimates by comparing predicted outcomes at full versus zero exposure levels. 
	
	More formally, our estimation proceeds in three steps:
	
	\paragraph{Step 1: Feature Construction.} For each unit, construct:
	\begin{itemize}
		\item \emph{Exposure variables:} $E_i$, $\Edirect_j$, $\Eindirect_j$ as defined above
		\item \emph{Network features:} $n^{\primary}_i$ (number of connections), graph-based covariates
		\item \emph{Propensity scores:} When neighbors are unweighted ($w_{ij}=1/n^{\primary}_i$), $E_i$ takes values 
		in $\{0,1/n^{\primary}_i,\dots,1\}$. Writing $e = k/n^{\primary}_i$ with 
		$k\in\{0,\dots,n^{\primary}_i\}$, the generalized propensity score has the 
		binomial form
		\begin{equation}\label{eq:propensity}
			r(e, n^{\primary}_i, p)
			= \binom{n^{\primary}_i}{k} p^{k}(1-p)^{n^{\primary}_i-k}\,.
		\end{equation}
		For treatment-side units, we do not explicitly calculate the propensity score. Instead, we assume that the propensity score is captured through the non-parametric function of $\psi$.
	\end{itemize}
	
	\paragraph{Step 2: Outcome Function Estimation.} Using the constructed features, estimate $\Phi$ or $\Psi$ via machine learning:
	\begin{itemize}
		\item \emph{Linear polynomial (LP):} Second-order polynomial in $(E_i, r, E_i \times r, E_i^2, r^2)$ for outcome-level, similar for treated-unit level (we only use LP in synthetic settings where there are no covariates)
		\item \emph{Kernel Ridge Regression (KRR):} Non-parametric regression with Gaussian kernel, bandwidth selected via 5-fold cross-validation
		\item \emph{Other methods:} Random forests, gradient boosting (details in online supplement)
	\end{itemize}
	
	\paragraph{Step 3: Counterfactual Prediction and Aggregation.} For each unit, predict outcomes under full treatment ($E=1$ or $\Edirect_j(\mathbf{Z}^{(1)}), \Eindirect_j(\mathbf{Z}^{(1)})$ and no treatment ($E=0$ or $\Edirect_j(\mathbf{Z}^{(0)}), \Eindirect_j(\mathbf{Z}^{(0)})$). Average the differences to obtain PTTE estimates via Eq.~\eqref{eq:PTTE-outcome-primary} or \eqref{eq:PTTE-treatment}.
	
	\paragraph{Variance Estimation.} We construct confidence intervals via bootstrap: sample treatment-side units with replacement, re-estimate on each bootstrap sample, and compute quantiles of the resulting PTTE distribution. For the projection approach, we project each bootstrap replicate and compute quantiles at the target level.

	\subsection{Extension to Secondary Effects}\label{subsec:secondary}
	
	While primary units receive direct treatment, secondary units ($\secondaryset$) experience spillover effects through shared connections with outcome units. These secondary total treatment effects (STTE) capture the indirect impact on ineligible units, an important component for understanding the full ecosystem effects of eligibility-constrained experiments.
	
	\subsubsection{Outcome-Unit Level STTE}
	
	At the outcome-unit level, we extend our framework to account for connections to both primary and secondary units. Recall that $\mixeoutcomeunits$ denotes outcome units connected to both 
	primary and secondary treatment units. For $i\in\mixeoutcomeunits$, the 
	secondary-edge component $Y_{i,\secondary}(\mathbf Z)$ can change when we 
	treat the primary units. We define the Secondary Total Treatment Effect 
	(STTE) at the outcome level as
	\begin{equation}\label{eq:STTE-outcome-new}
		\STTE_{\outcome}
		= \frac{1}{|\mixeoutcomeunits|}
		\sum_{i\in \mixeoutcomeunits}
		\Big(
		\E\big[ Y_{i,\secondary}(\mathbf Z^{(1)}) \big]
		- 
		\E\big[ Y_{i,\secondary}(\mathbf Z^{(0)}) \big]
		\Big).
	\end{equation}
	Outcome units connected only to secondary treatment units are unaffected by 
	the randomization among primary units and therefore do not contribute to 
	$\STTE_{\outcome}$.
	
	In estimation we model $Y_{i,\secondary}$ via a function $\Phi_{\secondary}$ as follows. For outcome units in $\mixeoutcomeunits$, let $n^{\secondary}_i$ be the number of secondary treatment units connected to outcome unit $i$. We then assume:
	\begin{equation}\label{eq:sec-specification}
		Y_{i,\secondary}(\tilde{\bw},\bZ,X_i) = \Phi_{\secondary}\Big(n^{\primary}_i, n^{\secondary}_i, E_i, r(E_i,\tilde{\bw},X_i)\Big) + \epsilon_i
	\end{equation}
	where $\tilde{\bw}$ denotes the enriched graph structure including treatment unit classifications, i.e., primary and secondary. 
	Under the same exposure mapping assumption as in Section~\ref{subsec:outcome-ensemble}, 
	we can equivalently write
	\eqref{eq:STTE-outcome-new} as
	\begin{equation}\label{eq:STTE-outcome}
		\STTE_{\outcome} = \frac{1}{|\mixeoutcomeunits|}\sum_{i\in\mixeoutcomeunits}\E\Big[Y_{i,\secondary}(e_{i,\max})\Big] - \E\Big[Y_{i,\secondary}(0)\Big]\,,
	\end{equation}
	where $e_{i,\max}$ denotes the exposure of unit $i$ 
	when all primary units are treated.
	
	\subsubsection{Treatment-Unit Level STTE}
	
	For a secondary treatment unit $j\in\mathcal T_{\text{sec}}$, we focus on 
	the part of its outcome that flows through outcome units exposed to primary 
	units. Define
	\[
	Y_{j,\secondary}(\mathbf Z)
	:= \sum_{i\in\mathcal \mixeoutcomeunits} Y_{ij}(\mathbf Z).
	\]
	Edges connecting $j$ to outcome units that are never connected to any primary 
	unit cannot be affected by the primary treatment assignment and are thus 
	excluded from the STTE. We define the treatment-level STTE as
	\begin{equation}\label{eq:STTE-treatment-new}
		\STTE_{\treatment}
		= \frac{1}{|\mathcal T_{\text{sec}}|}
		\sum_{j\in\mathcal T_{\text{sec}}}
		\Big(
		\E\big[ Y_{j,\secondary}(\mathbf Z^{(1)}) \big]
		- 
		\E\big[ Y_{j,\secondary}(\mathbf Z^{(0)}) \big]
		\Big).
	\end{equation}
	In our implementation, $Y_{j,\secondary}(\mathbf Z)$ is modeled as a function 
	exposure and covariates. Specifically, under the representation,
	\begin{equation}
		Y_{j,\secondary} = \Psi_{\secondary}(\Eindirect_j, X_j) + \epsilon_j\,,
	\end{equation}
	where $\Eindirect_j$ follows a variant of Eq.~\eqref{eq:exposure-ind} for secondary units,
	Eq.~\eqref{eq:STTE-treatment} can be written as:
	\begin{equation}\label{eq:STTE-treatment}
		\text{STTE}_{\treatment} = \frac{1}{|\secondaryset|}\sum_{j\in\secondaryset}\E\left[\Psi_{\secondary}\Big(\Eindirect_j(\mathbf Z^{(1)}), X_j\Big) - \Psi_{\secondary}(\Eindirect_j(\mathbf Z^{(0)}), X_j)\right]\,.
	\end{equation}
	
	\subsubsection{Projection and Estimation}
	
	Under Assumption~\ref{ass:linear-additive}, the same counting argument as in 
	Theorem~\ref{thm:projection} yields
	\begin{equation}\label{eq:projection-STTE-new}
		\STTE_{\outcome}
		= \frac{|\mathcal T_{\text{sec}}|}{|\mixeoutcomeunits|}
		\; \STTE_{\treatment}.
	\end{equation}
	This enables efficient computation at the treatment-unit level with projection to outcome-level effects, maintaining the computational advantages discussed for PTTE.
	
	The ability to quantify STTE is important for bipartite experiments where ineligible units constitute a substantial portion of the ecosystem. In our ride-sharing example, understanding how visual prominence of economy vehicles affects premium and XL vehicles informs both immediate design decisions and long-term fleet composition strategies.
	
	\section{Validation on Simulated and Real Experimental Data}\label{sec:empirics}
	
	We validate our methodology through two complementary approaches. First, we conduct extensive simulations where ground truth effects are known, allowing us to assess both primary (PTTE) and secondary (STTE) treatment effects while evaluating the accuracy of our estimators and projection methods. Second, we apply our framework to two real experiments, focusing on PTTE estimation where business logic provides directional expectations for bias validation.
	
	\subsection{Simulation Results}\label{subsec:simulated}
	
	\subsubsection{Simulation Design}
	
	We simulate a bipartite system where outcome units (riders) interact with both primary and secondary treatment units (different vehicle types) through an app interface. The outcome function for rider $i$ selecting vehicle type $j$ follows:
	\begin{equation}
		f_{ij} = \alpha_j + \gamma_i \sum_{k \in \mathcal{E}_i} \beta_{jk} \log(v_k) + \epsilon_{ij}
	\end{equation}
	where $f_{ij}$ represents the selection outcome (e.g., booking probability or ride frequency), $\alpha_j$ is the baseline selection rate for vehicle type $j$, $\gamma_i$ captures rider-specific interface responsiveness, $\beta_{jk}$ represents the \emph{attention spillover coefficient} between vehicle types $j$ and $k$, $v_k$ is the \emph{visibility score} for vehicle type $k$ (with baseline $v_k = 1.0$ for standard display), and $\mathcal{E}_i$ is the set of vehicle types (primary and secondary) available to rider $i$.
	
	The attention spillover coefficients $\beta_{jk}$ encode how visual prominence affects selection patterns through cognitive attention mechanisms. The key patterns are:
	\begin{itemize}
		\item $\beta_{jj} < 0$: \emph{Prominence saturation effect}, excessive visual emphasis on a vehicle type can trigger adverse user reactions, where riders perceive over-promotion as pushy or interpret it as a negative quality signal
		\item $\beta_{jk} > 0$ for substitutes: \emph{Complementary positioning}, when economy vehicles are prominently featured, some premium vehicles benefit from appearing as the deliberate ``upgrade'' choice in contrast
		\item $\beta_{jk} < 0$ for competitors: \emph{Competitive attention}, direct competition for limited user attention, where prominence of one vehicle type draws cognitive resources away from others
	\end{itemize}
	
	The logarithmic transformation captures \emph{diminishing attention returns}: initial gains in visual prominence have larger effects than additional prominence, reflecting cognitive limits in processing visual stimuli and the natural saturation of attention capture.
	
	Treatment consists of enhanced app placement for primary units (economy vehicles), implemented as a 10\% increase in visibility score ($v_k = 1.1$). This could represent various interface changes: moving from position 3 to position 1 in the selection list, increasing badge size and animation intensity, or expanding the screen real estate allocated to these vehicles. Only primary units are eligible for this enhanced display treatment due to technical constraints or strategic considerations, while secondary units (premium and XL vehicles) maintain standard display formatting.
	
	We vary the average number of primary units per rider (2.8 to 8.0, representing market density) and treatment probability (40-50\%, representing rollout coverage), running 50 replications per configuration to ensure robust estimates. For each replication we compute ground-truth $PTTEs$ and $STTEs$ by evaluating the data generating process at $\mathbf{Z}^{(1)}$ and $\mathbf{Z}^{(0)}$.
	
	\subsubsection{Primary Treatment Effects (PTTE)}
	
	Table \ref{tab:ptte-outcome} presents PTTE estimates across five specifications. The Basic approach (a variant of difference-in-means) that ignores interferences consistently underestimates effects by 15-20\%, as it fails to account for positive interferences that increase outcomes. Non-parametric methods (KRR) achieve estimates within 0.5\% of ground truth across all specifications, while linear polynomial (LP) approaches degrade substantially as exposure complexity increases.
	
	\begin{table}[h]
		\caption{Outcome-Level PTTE Simulation Results: Median values across 50 replications}
		\label{tab:ptte-outcome}
		\centering
		\small
		\begin{tabular}{ccccccccc}
			\toprule
			\multicolumn{3}{c}{Specification} &  \multicolumn{4}{c}{Outcome Level} &  \multicolumn{2}{c}{Treatment to Outcome Level}  \\
			\cmidrule(r){1-3} \cmidrule(r){4-7} \cmidrule(r){8-9}
			No    & \(\avg(|\primetreatmentunits{i}|)\)    & \% Primary Treated & GT    & Basic & LP & KRR   & Proj. GT & Proj. KRR \\
			\midrule
			1     & 2.8         & 50\%                      & 0.5 & 0.42 & 0.51 & 0.50 & 0.5 & 0.5 \\    
			2     & 5.4         & 50\%                      & 0.97 & 0.81 & 0.92 & 0.97 & 0.97 & 0.97  \\
			3     & 8.0         & 50\%                      & 1.46 & 1.21 & 0.98 & 1.46 & 1.46 & 1.46       \\
			4     & 8.0         & 45\%                      & 1.46 & 1.19 & 0.79 & 1.46 & 1.46 & 1.46        \\
			5     & 8.0         & 40\%                      & 1.46 & 1.16 & 0.58 & 1.46 & 1.46 & 1.46       \\  
			\bottomrule
		\end{tabular}
	\end{table}
	
	Figure \ref{fig:boxplot} illustrates the distribution of estimates for Specification 1, top row of Table \ref{tab:ptte-outcome}, demonstrating that treatment-level estimates projected to outcome level (Proj. KRR) match direct estimation while offering computational advantages, critical when outcome units outnumber treatment units by orders of magnitude. Similar figures for all specifications are provided in Appendix \ref{sec:additiona-sims}
	
	\begin{figure}[h]
		\centering
		\includegraphics[width=\linewidth]{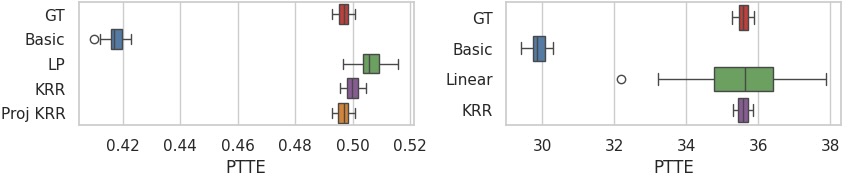}
		\caption{PTTE estimates for outcome units (left) and treatment units (right), showing ground truth (GT), Basic, linear polynomial (LP), Kernel Ridge Regression (KRR), and its projected variant (Proj. KRR, outcome level only).}
		\label{fig:boxplot}
	\end{figure}
	
	\subsubsection{Secondary Treatment Effects (STTE)}
	
	Table \ref{tab:stte-simulation} presents STTE estimates for secondary units. These effects, while smaller than PTTE, remain substantial, approximately 17\% of primary effects in our simulations. Importantly, STTE would be exactly zero under the no-spillover assumption, highlighting the magnitude of bias when ignoring network effects.
	
	\begin{table}
		\caption{STTE Simulation Results: Median values across 50 replications}
		\label{tab:stte-simulation}
		\centering
		\small  
		\begin{tabular}{cccccccc}
			\toprule
			\multicolumn{3}{c}{Specification} &  \multicolumn{3}{c}{Outcome-Level} &  \multicolumn{2}{c}{Treatment-Level}  \\
			\cmidrule(r){1-3} \cmidrule(r){4-6} \cmidrule(r){7-8}
			No    & \(\avg(|\primetreatmentunits{i}|)\)    & \% Primary Treated & GT     & XGBoost & Proj KRR & GT    & KRR \\
			\midrule
			1 & 2.8 & 50\% & 0.10 & 0.10 & 0.10 & 11.4 & 11.4 \\
			2 & 5.4 & 50\% & 0.17 & 0.17 & 0.17 & 24.0 & 24.0 \\
			3 & 8.0 & 50\% & 0.25 & 0.24 & 0.25 & 36.6 & 36.6 \\
			4 & 8.0 & 45\% & 0.25 & 0.24 & 0.25 & 36.6 & 36.6 \\
			5 & 8.0 & 40\% & 0.25 & 0.24 & 0.25 & 36.6 & 36.7 \\   
			\bottomrule
		\end{tabular}
	\end{table}

	For secondary effects, gradient boosting methods (XGBoost) perform well at the outcome level, while KRR excels at the treatment level. The projection approach again demonstrates remarkable accuracy, with projected estimates matching ground truth within 0.1\%. Similar distributional plots as in Figure \ref{fig:boxplot} are provided in Appendix \ref{sec:additiona-sims}. These results validate our framework's ability to capture the complete spillover ecosystem, quantifying both direct effects on eligible units and indirect effects on ineligible units that continue to operate in the bipartite setting.
	
	\subsubsection{Computational Benefits}
	
	The projection methodology offers substantial computational advantages. With typical ratios of outcome to treatment units exceeding 1000:1, estimation at the treatment level followed by projection reduces computation time from hours to minutes. For a system with 300,000 outcome units and 300 treatment units, we observe 1000x speedup while maintaining accuracy, needed for practical deployments.
	
	\subsection{Real Experiment Results}\label{subsec:real-data}
	Focusing on PTTE, we validate the method on two real experiments, denoted by A and B, with three pre-specified metrics, $M_1$, $M_2$, and $M_3$. 
	For both experiments, business logic and economic theory imply that ignoring spillovers should bias $M_1$ upward (direction known ex ante), whereas no directional prior is available for $M_2$ or $M_3$. Among the three, $M_3$ is the primary decision metric.
	
	\begin{table}[htp]
		\caption{Comparison of Basic Method (ignoring spillovers) and Our Approach on Two Real Experiments. For each anonymized metric, we report treatment effect estimates (ATE for Basic, PTTE for our method), statistical significance at 5\% level, and observed bias direction. Expected bias direction for $M_1$ follows from economic theory; $M_3$ is the primary decision metric. Sample sizes: $|\primaryset|\approx 7,000$ (Experiment A) and $8,000$ (Experiment B).}
		\label{tab:RealExperiments}
		\centering
		\small
		\begin{tabular}{@{}l@{\hskip 0.3cm}lcccccc@{}}
			\toprule
			& \textbf{Metric} & \multicolumn{2}{c}{\textbf{Basic Method}} & \multicolumn{2}{c}{\textbf{Our Approach}} & \multicolumn{2}{c}{\textbf{Bias}} \\
			\cmidrule(lr){3-4} \cmidrule(lr){5-6} \cmidrule(lr){7-8}
			& & ATE & Sig. & PTTE & Sig. & Basic $-$ Ours & Basic $-$ Ground Truth \\
			& & & & & & (Observed) & (Expected) \\
			\midrule
			\textbf{Experiment A} & &\\
			\addlinespace[0.1em]
			& $M_1$ & \textcolor{ForestGreen}{Pos.} & Yes & \textcolor{BrickRed}{Neg.} & No & \textcolor{ForestGreen}{Pos.} & \textcolor{ForestGreen}{Pos.} \\
			& $M_2$ & \textcolor{ForestGreen}{Pos.} & No & \textcolor{BrickRed}{Neg.} & No & \textcolor{ForestGreen}{Pos.} & — \\
			& $M_3$ & \textcolor{BrickRed}{Neg.} & No & \textcolor{ForestGreen}{Pos.} & Yes & \textcolor{BrickRed}{Neg.} & —  \\
			\addlinespace[0.2em]
			\midrule
			\addlinespace[0.1em]
			\textbf{Experiment B} & & \\
			\addlinespace[0.1em]
			& $M_1$ & \textcolor{ForestGreen}{Pos.} & No & \textcolor{BrickRed}{Neg.} & No & \textcolor{ForestGreen}{Pos.} & \textcolor{ForestGreen}{Pos.} \\
			& $M_2$ & \textcolor{ForestGreen}{Pos.} & Yes & \textcolor{ForestGreen}{Pos.} & No & \textcolor{ForestGreen}{Pos.} & — \\
			& $M_3$ & \textcolor{ForestGreen}{Pos.} & No & \textcolor{BrickRed}{Neg.} & No & \textcolor{ForestGreen}{Pos.} & —  \\
			\bottomrule
		\end{tabular}
	\end{table}
	
	From the results, presented in Table \ref{tab:RealExperiments}, two observations follow. First, in both A and B our approach moves $M_1$ in the expected direction (the difference, Basic$-$Ours, is positive), recovering the correct \emph{direction of bias}. Second, for the decision metric, $M_3$, ignorance of spillover effects yields an alternate launch conclusion for Experiment A from what is concluded using our method. In Experiment B, both methods yield the same ``no effect'' conclusion on $M_3$ (non-significant), leaving the decision unchanged. Overall, it is reassuring that for $M_1$ our method consistently identifies the correct direction of spillover bias across both experiments.

	\section{Conclusion}\label{sec:conclusion}
	
	This paper studies eligibility-constrained bipartite experiments in which only a subset of treatment-side units can be randomized while interference propagates through all units. We formalize primary and secondary total treatment effects (PTTE, STTE), give identification under randomization within the eligible set, and propose flexible estimators that leverage generalized propensity scores and machine learning. Under a linear additive edges condition, we derive a projection linking treatment- and outcome-level estimands, enabling estimation on the typically smaller set of treatment units with aggregation to outcomes. In simulations and two case studies, accounting for interference yields effect estimates that differ materially from analyses that ignore spillovers, including a sign difference for one primary decision metric. These findings illustrate that, in settings with interaction across unit types, effect definitions and estimators that target the total impact at rollout can lead to different conclusions than conventional A/B analyses.
	
	\paragraph{Scope and limitations.}  
	Our framework is useful when (i) the bipartite structure is substantively meaningful, (ii) randomization is feasible within an eligible subset of treatment units, and (iii) one wishes to estimate the effect of full deployment to the eligible set. Several limitations should be highlighted.
	
	\begin{itemize}
		\item \emph{Network observability.} Identification and estimation rely on access to the interaction structure (e.g., weights $w_{ij}$ or close proxies). When the network is unavailable, only coarsely measured, or measured with substantial error, the proposed methods would not be applicable or may be biased. Designs and estimators for unknown networks exist \citep[e.g.,][]{yu2022estimating,cortez2022staggered,shirani2024causal,shirani2025can} but are not developed here.
		
		\item \emph{Exogenous network.} Assumption~\ref{ass:fixed-weights} requires that the network is unaffected by treatment. If assignment changes matching, congestion, or availability in ways that alter edges, the estimands here are not identified without additional modeling of joint assignment–network dynamics.
		
		\item \emph{Unconfoundedness and overlap.} With randomization inside the eligible set, Assumption~\ref{ass:weak-unconfoundedness} holds by design, but positivity (Assumption~\ref{ass:overlap}) can fail in sparse or highly skewed graphs. In such cases, models extrapolate to exposure regions with little support, increasing estimator sensitivity.
		
		\item \emph{Projection validity.} The projection between treatment and outcome levels requires linear additivity of edge-level outcomes (Assumption~\ref{ass:linear-additive}). When the target metric is non-additive (e.g., medians, capped rates, or composite satisfaction scores), the projection is invalid and estimation should remain at the metric’s native level.
		
		\item \emph{Finite-sample and algorithmic considerations.} The ensemble estimators depend on tuning choices and on the accuracy of generalized propensity features. Inference is implemented via bootstrap resampling; while practical, its coverage under complex dependence structures may deviate from nominal levels in small samples.
		
		\item \emph{External validity.} Empirical validation is limited to two experiments in a particular context. The sign and magnitude of spillovers are design- and environment-specific; conclusions need not transport to other systems, policies, or time periods.
	\end{itemize}
	
	\paragraph{Practical implications.}  
	When eligibility constraints coexist with cross-side interactions, analysts seeking the total effect of rollout should (i) collect and audit the interaction data used to form exposure, (ii) assess overlap in exposure distributions prior to modeling, (iii) report estimates at the natural measurement level of each metric and invoke projection only when Assumption~\ref{ass:linear-additive} is defensible, and (iv) accompany point estimates with diagnostic sensitivity analyses to plausible network misspecification. In settings where the network cannot be measured or Assumption~\ref{ass:fixed-weights} is doubtful, alternative designs that reduce interference (e.g., graph/cluster randomization) may be preferable \citep{ugander2013graph,eckles2017design}.
	
	\paragraph{Future directions.}  
	Promising extensions include: inference with partially observed or mismeasured networks (including formal sensitivity analyses); experimental designs that jointly learn the interaction structure while estimating PTTE/STTE; dynamic or sequential treatments where networks evolve; heterogeneity of total effects across network positions; and methods for non-additive outcomes. Finally, formal results on efficiency and inference under general dependence structures would help clarify the conditions under which the computational advantages of treatment-level estimation translate to statistical gains.
	
	Overall, the contribution is conceptual and methodological: we articulate estimands aligned with deployment in eligibility-constrained, bipartite settings and provide tools to estimate them under explicit assumptions. Equally important, we discuss the conditions under which these tools should be used with extra caution.

	\bibliographystyle{plainnat}
	\bibliography{references}
	
	\newpage
	
	\begin{appendix}
		
		\section{Key Notation}
		
		This section provides a summary of the key mathematical notation and variable definitions in Table~\ref{tab:notation}. 
		\begin{table}[H]
			\centering
			\small
			\caption{Key notation for eligibility-constrained bipartite experiments}
			\label{tab:notation}
			\begin{tabular}{cl}
				\toprule
				Symbol & Definition \\
				\midrule
				\multicolumn{2}{l}{\textbf{Units and Sets}} \\
				$\primaryset, \secondaryset$ & Primary (eligible) and secondary (ineligible) treatment units \\
				$\primaryoutcomeunits$ & Outcome units connected to $\ge 1$ primary unit \\
				$\mixeoutcomeunits$ & Outcome units connected to both primary \& secondary treatment units\\
				$\primetreatmentunits{i}$ & Set of primary units connected to outcome unit $i$ \\
				$\exposedcustomers{j}$ & Set of outcome units connected to treatment unit $j$ \\
				\midrule
				\multicolumn{2}{l}{\textbf{Treatment and Assignment}} \\
				$\bZ$ & Vector of treatment assignments (size is the total number of treatment-side units) \\
				$Z_j$ & Treatment assignment indicator for treatment-side unit $j$ \\
				$p$ & Treatment probability \\
				\midrule
				\multicolumn{2}{l}{\textbf{Network Structure}} \\
				$\bw$ & $n_{\outcome} \times n_{\treatment}$ matrix of connection weights \\
				$w_{ij}$ & Connection weight between outcome unit $i$ and treatment unit $j$ \\
				$\tilde{\bw}$ & Enriched graph structure including unit classifications\\
				\midrule
				\multicolumn{2}{l}{\textbf{Exposure Variables}} \\
				$E_i$ & Treatment exposure of outcome unit $i$ given assignment $\bZ$ \\
				$\Edirect_j, \Eindirect_j$ & Direct and indirect exposure of treatment unit $j$ \\
				$e_{i,\max}$ & Maximum exposure when all primary units are treated \\
				$n^{\primary}_i$ & Number of primary units connected to outcome unit $i$ \\
				$n^{\secondary}_i$ & Number of secondary units connected to outcome unit $i$ \\
				\midrule
				\multicolumn{2}{l}{\textbf{Outcome Variables}} \\
				$Y_i$ & Outcome for outcome unit $i$ \\
				$Y_j$ & Outcome for treatment unit $j$ \\
				$Y_{ij}$ & Edge-level outcome between units $i$ and $j$ \\
				$Y_{i,\secondary}$ & Outcomes attributable to secondary units \\
				$Y_{j,\secondary}$ & Secondary treatment unit outcome \\
				\midrule
				\multicolumn{2}{l}{\textbf{Functions and Parameters}} \\
				$\Phi, \Psi$ & Outcome functions for outcome and treatment units\\
				$r(E_i, \bw, X_i)$ & Generalized propensity score: $\prob(E_i | \bw, X_i)$ \\
				$X_i, X_j$ & Covariates for outcome and treatment units \\
				$\epsilon_i$ & Error term \\
				\midrule
				\multicolumn{2}{l}{\textbf{Estimands}} \\
				$\PTTE_{\outcome}$, $\STTE_{\outcome}$ & Primary and secondary total treatment effect at outcome-side \\
				$\PTTE_{\treatment}$, $\STTE_{\treatment}$ & Primary and secondary total treatment effect at treatment-side \\
				\midrule
				\multicolumn{2}{l}{\textbf{Mathematical Operators}} \\
				$\E[\cdot]$ & Expectation operator \\
				$\prob(\cdot)$ & Probability operator \\
				$\indicator(\cdot)$ & Indicator function (1 if true, 0 otherwise)\\
				\bottomrule
			\end{tabular}
		\end{table}
		
		\clearpage 
		
		\section{Additional Simulation Results}\label{sec:additiona-sims}
		
		Figures \ref{fig:ptteoutcome5spec}-\ref{fig:pttetreat5spec} show the boxplot of the ground truth and estimates of PTTE for outcome-level and treatment unit-level, respectively, for specifications 1-5. Similar quantities for STTE are presented in Figures \ref{fig:stteoutcome5spec}-\ref{fig:sttetreat5spec}.

		\begin{figure}[h]
			\centering
			\includegraphics[width=\linewidth]{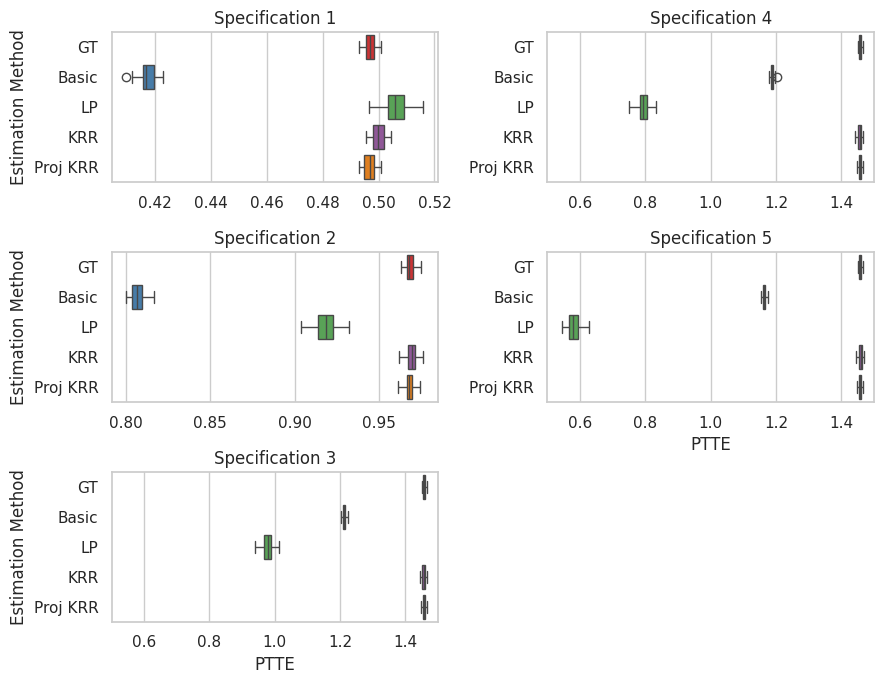}
			\caption{Box plot of PTTE at the outcome unit granularity}
			\label{fig:ptteoutcome5spec}
		\end{figure}
		\begin{figure}
			\centering    \includegraphics[width=\linewidth]{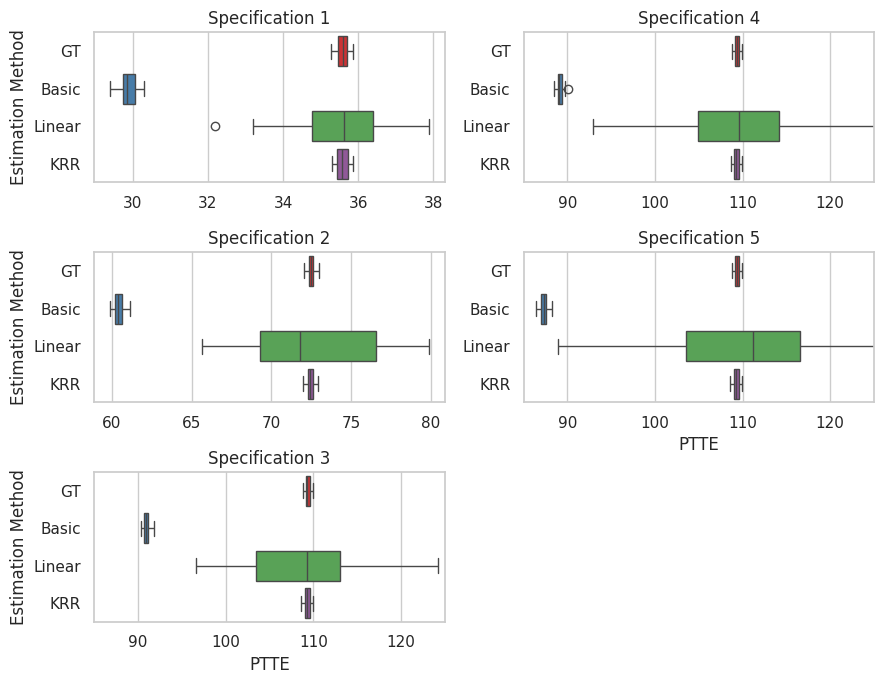}
			\caption{Box plot of PTTE at the treatment unit granularity}
			\label{fig:pttetreat5spec}
		\end{figure}
		\begin{figure}
			\centering \includegraphics[width=\linewidth]{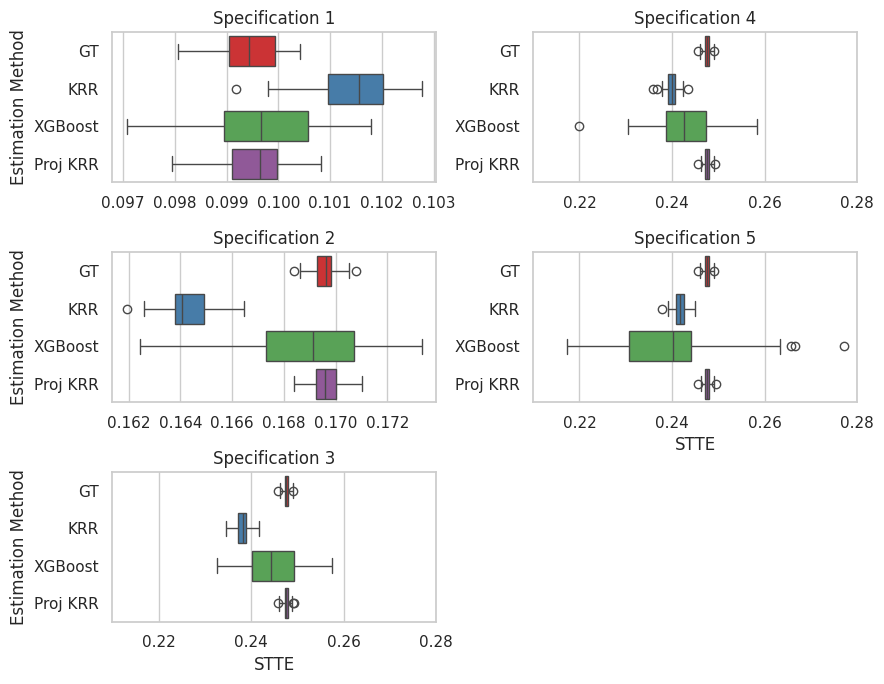}
			\caption{Box plot of STTE at the outcome unit  granularity}
			\label{fig:stteoutcome5spec}
		\end{figure}
		\begin{figure}
			\centering
			\includegraphics[width=\linewidth]{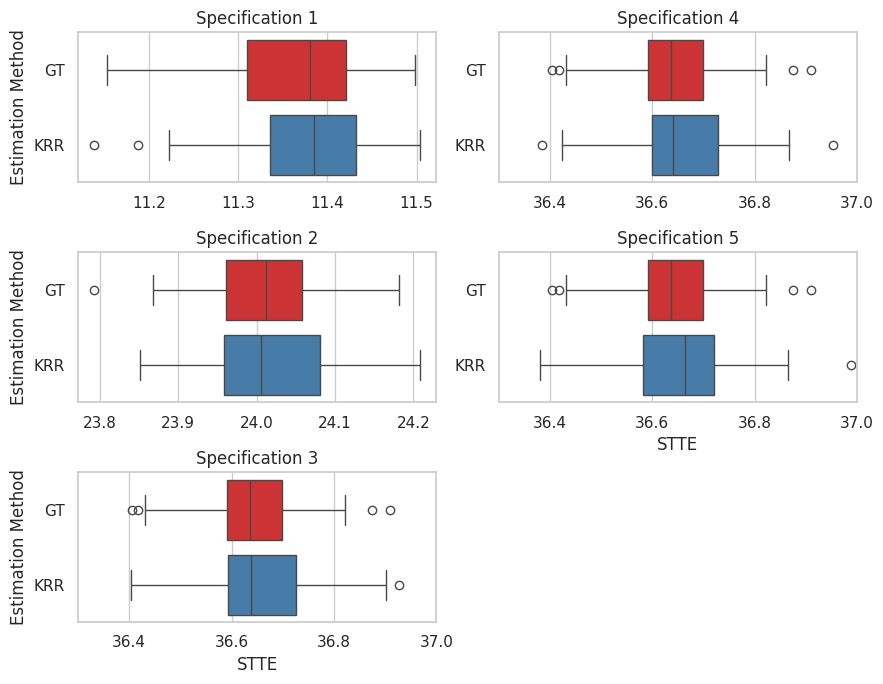}
			\caption{Box plot of STTE at the treatment unit granularity}
			\label{fig:sttetreat5spec}
		\end{figure}
		
	\end{appendix}
	
\end{document}